\newtheorem{theorem}{Theorem}
\begin{document}
\bibliographystyle{IEEEtran}
\title{On the Performance Trade-off of Distributed  Integrated Sensing and Communication Networks}

\author{Xuran Li, Shuaishuai Guo,~\IEEEmembership{Senior Member, IEEE,}  Tuo Li, Xiaofeng Zou, Dengwang Li

\thanks{Xuran Li and Dengwang Li are  with Shandong Key Laboratory of Medical Physics and Image Processing, School of Physics and Electronics, Shandong Normal University, Jinan 250061, China; (e-mail:sdnulxr@sdnu.edu.cn; dengwang@sdnu.edu.cn).}
\thanks{Shuaishuai Guo is with School of Control Science and Engineering, Shandong University, Jinan 250061, China and also  with Shandong Provincial Key Laboratory of Wireless Communication Technologies(e-mail: shuaishuai\textunderscore guo@sdu.edu.cn).}
\thanks{Tuo Li and Xiaofeng Zou are with Shandong Yunhai Guochuang Cloud Computing Equipment Industry Innovation Co., Ltd; (e-mail:lituo@inspur.com; zouxf@inspur.com).}
\thanks{(\emph{Corresponding author: Shuaishuai Guo}.)}
}



\maketitle

\begin{abstract}
In this letter, we analyze the performance trade-off in distributed integrated sensing and communication (ISAC) networks. Specifically, with the aid of stochastic geometry theory, we derive the probability of detection of that of the coverage given user number. Based on the analytical derivations, we provide a quantitative description of the performance limits and the performance trade-off between sensing and communication in a distributed ISAC network under the given transmit power and bandwidth budget. Extensive simulations are conducted and the numerical results validate the accuracy of our derivations.
\end{abstract}


\begin{IEEEkeywords}
Integrated sensing and communication (ISAC), performance trade-off, stochastic geometry.
\end{IEEEkeywords}

\section{Introduction}

Integrated sensing and communication (ISAC) emerges as a promising solution to enable wireless communication and radar sensing in the same system\cite{LiuFan:2022JSAC}.
By utilizing the same spectrum, hardware platform, and signal processing modules, ISAC can improve spectrum and energy efficiency, thereby solving spectrum congestion issues, while reducing hardware and signaling costs\cite{Zhiqiang:2022JSAC}.
Although the ISAC system has the potential to improve the efficiency of energy, spectrum, and hardware of the whole system, the resource competition (i.e. limited power, spectrum, antennas, etc.) between the sensing  and communication functionalities is an important problem to be solved\cite{ShuaishuaiGuo:Mobicom2022}. Therefore, it is essential to figure out the fundamental performance limits and performance trade-offs between sensing  and communication functionalities in ISAC systems under resource competition\cite{CongDingyan:WCL2022}.

Figuring out the fundamental performance trade-off between sensing and communication is a core problem in ISAC networks with limited wireless resources (i.e. power, spectrum, antennas, etc.). Current communication-sensing performance trade-off analysis in ISAC networks can be classified into two categories. In centralized ISAC networks, such as cellular networks, Kafafy \emph{et al.} considered the performance trade-off of the ISAC cellular system sharing spectrum with a rotating radar \cite{Kafafy:2021ICL}. The performance trade-offs of downlink and uplink in the ISAC cellular system were separately analyzed with the metric communication rate and sensing rate in \cite{Ouyang:2022WCL,ZhangChao:2023Tcom}. The performance trade-off of a multi-point ISAC system was analyzed in \cite{Guoliang:2022WCL}, and the fusion center of this multi-point ISAC system fuses the outputs from multiple ISAC devices to achieve higher detection accuracy.
In distributed ISAC networks, such as vehicular networks \cite{CongDingyan:TITS2022}, the authors analyzed the spectral efficiency of uplink communications to improve the performance of their beamforming design. A convolutional neural network was presented to fuse the ISAC signals and improve the energy efficiency of a distributed Internet of Things (IoT) network \cite{Zexuan:2023IoTJ}. The probability of detection and spectrum efficiency of a full-duplex ISAC scheme were analyzed in unmanned aerial vehicle (UAV) networks \cite{Zhiqiang:2022JSAC}.

It is noteworthy that most of the existing works rely on deterministic assumptions. For example, the number of ISAC users in the networks is assumed to be fixed \cite{Zhiqiang:2022JSAC,CongDingyan:TITS2022}, and the channel state information for communication and the potential position of sensing targets are assumed known \cite{Zexuan:2023IoTJ,Kafafy:2021ICL,Ouyang:2022WCL,ZhangChao:2023Tcom}.
Rather than relying on deterministic assumptions, stochastic geometry investigates the inherent randomness and uncertainties in network deployments. Besides, stochastic geometry facilitates the scalability analysis of networks. It allows for evaluating network performance as the number of network elements scales \cite{Xiaoming:2022TWC}.


In this letter, we propose a  stochastic geometry-based performance analysis framework to analyze distributed ISAC networks. Based on the proposed analytical framework, we derive mathematical closed-form expressions for the probability of detection in the sensing system and the probability of coverage in the communication system. 
The tractable analytical results of the derivations demonstrated the fundamental performance limits of the distributed ISAC network. Through extensive simulations, we validate the correctness and accuracy of our derivations.

\section{System model}
In this section, we introduce the ISAC system model, including the network model, channel model, and performance analysis model.

\subsection{Network Model}

\begin{figure}[ht]
\centering
\includegraphics[width=2.95in]{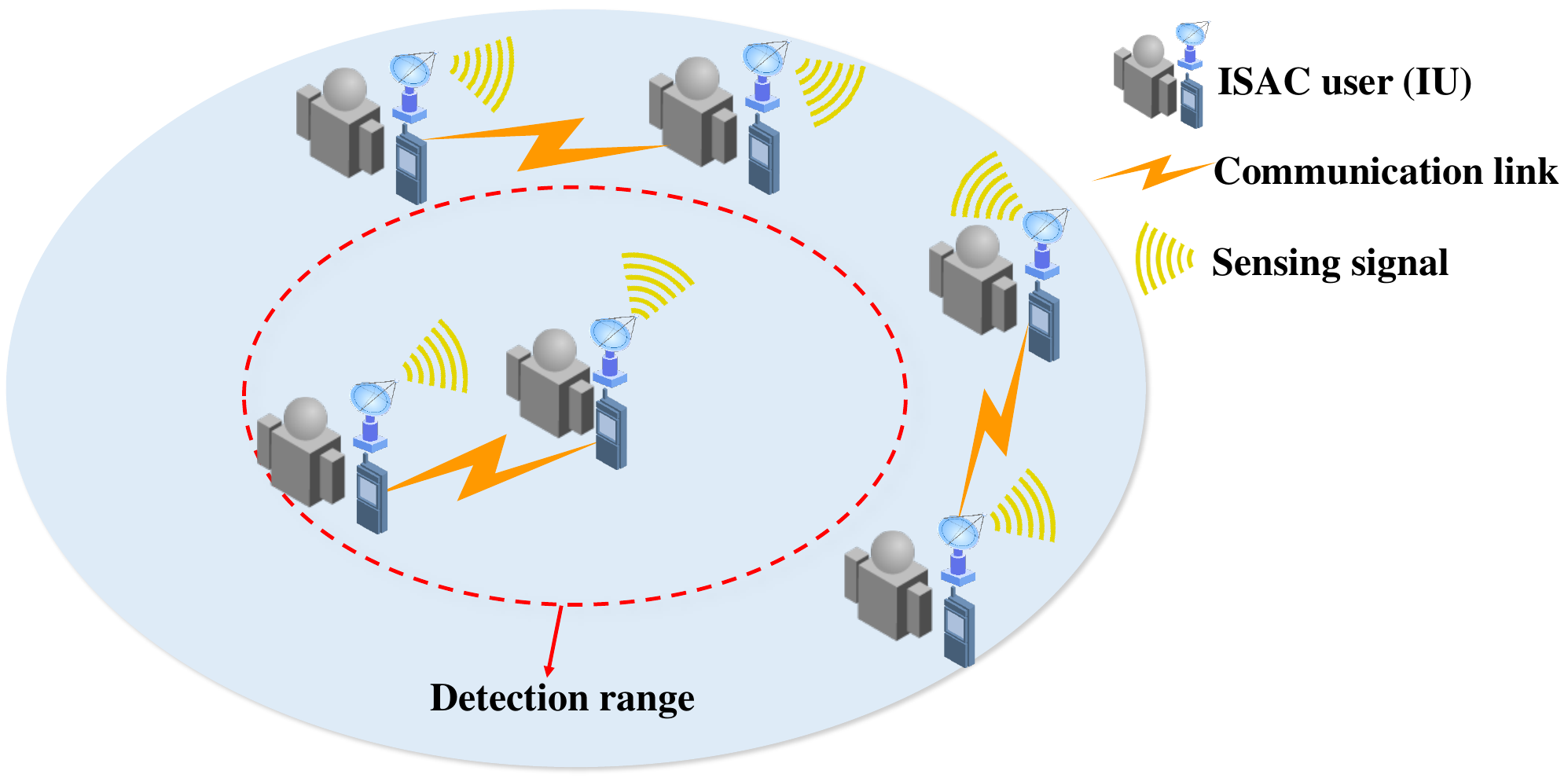}
\caption{An ISAC Network.}
\label{fig:network_model}
\end{figure}

In this paper, we consider an ISAC network, as shown in Fig. \ref{fig:network_model}.
In this network, multiple ISAC users (IU) are randomly distributed according to a homogeneous Poisson point process (PPP) in the circular network area. The IUs are assumed to communicate with each other directly without intervention from the base station. This assumption is feasible for many communication scenarios in that ISAC technology can be used, such as vehicular networks \cite{CongDingyan:TITS2022}, distributed IoT networks\mbox{\cite{Zexuan:2023IoTJ}}, UAV networks \cite{Zhiqiang:2022JSAC}. Without loss of generality, we assume the typical IU locates at the center of the circular area \cite{Xuran:2022JSAC}.


Each IU of the network transmits a sensing signal for target sensing, meanwhile, they transmit information signals for communication between network users.
All the IUs are assumed to share common bandwidth $B$ and transmit power $p_t$, their transmitted sensing signals and information signals interfere with each other \cite{Kafafy:2021ICL}.
In some ISAC networks, IUs may share the time resource of the ISAC system, while the performance trade-off of communication and sensing is a simple linear relationship in time domain \cite{LiuFan:2022JSAC}. Therefore, we focus on the performance trade-off of communication and sensing when the total bandwidth $B$ and transmit power $p_t$ are shared in this paper.

\subsection{Channel Model}
\label{subsec:channel_model}
Without loss of generality, we assume the wireless transmission channel experience Rayleigh fading and path loss. When the sensing target is with distance $D$ from the typical IU, the received target echo signal at the typical IU is expressed as \cite{Ren:2019WCL}
\begin{equation}
P_{e c h}=\frac{p_s G_t G_r \lambda_w^2 \sigma}{(4 \pi)^3 D^{2 \alpha}},
\label{eq:Pech}
\end{equation}
where $p_s$ is the transmitting power of the sensing signal; $G_t$ and $G_r$ are the transmitting antenna gain and receiving antenna gain separately; $\lambda_w$ is the wavelength of the sensing signal; $\alpha$ is the path loss factor, and $\sigma$ is the cross-section of the target.

We describe the cross-section of the target $\sigma$ as a random variable by the Swerling type-1 model \cite{Fang:2020WCL}, then we have the probability density function (PDF) of $\sigma$ given by
\begin{equation}
f (\sigma)=\frac{1}{\bar{\sigma}} e^{-\frac{\sigma}{\bar{\sigma}}}, \sigma \geq 0
\label{eq:sigma}
\end{equation}
where $\bar{\sigma}$ is the average cross-section of the target. This expression of $\sigma$ also follows Chi-square distribution with a degree of freedom $1$. This model is feasible for a relatively slow-moving target (like a human or vehicle).

When the sensing function is active, the sensing device conduct detection decisions based on the target echo power $P_{e c h}$. With the received target echo signal at the typical user $P_{e c h}$, we can derive the signal interference and noise ratio (SINR) of the sensing signal at the typical user as \cite{Fang:2020WCL,Guoliang:2022WCL}
\begin{equation}
\mathrm{SINR_s}=\frac{P_{e c h}}{I_s+N_0},
\end{equation}
where $N_0=K_0 T_0 B_s L$ is the power of additive white Gaussian noise; $K_0$ is the Boltzmann constant; $T_0$ is the standard temperature; $B_s$ is the noise bandwidth and $L$ is the system loss factor; $I_s$ is the aggregated interference from other IUs in the network.

In specific, the aggregated interference $I_s$ from other IUs is given by $I_s=\sum_{j=1}^n A_e S h_j l_j ^{-\alpha}$\cite{Fang:2020WCL,Ren:2019WCL},
where $A_e=\frac{G_r \lambda_w^2}{4 \pi}$ is the effective aperture of receiving antenna; $S=\frac{p_s G_t}{4 \pi}$ is the power density at a unit distance from an interferer \cite{Fang:2020WCL}; $n$ is the expected number of interferer IU devices; $l_j$ is the distance between the $j$th interferer IU device and the typical IU; $\alpha$ is the path loss factor; $h_j$ is the Rayleigh fading variable following an exponential distribution with mean $1$.

To evaluate the performance of radar targets, the metric ergodic radar estimation information rate for radar echoes is proposed in \cite{Chiriyath:2016TSP}.
This metric is analogous to the data information rate of the communications system. The relationship between the ergodic radar estimation information rate $R_{B}$ and the SINR is given by $R_{B} = \frac{\delta}{2 T} \log _2 ( 1 + 2 T  B_s \cdot \mathrm{SINR_s} ) $ \mbox{\cite{ZhangChao:2023Tcom}}, where $T$ is the radar pulse duration, and $\delta$ is the radar's duty cycle. From the expression of $R_{B}$, we find this metric is the calculated estimation rate of the parameters (range, bandwidth, power, etc.), and a higher $R_{B}$ indicates better performance for radar detection.

With a similar approach, we can derive the SINR of the information signal at the desired transmission receiver by \cite{JeongSeon:2023TVT}
\begin{equation}
\mathrm{SINR_c} = \frac{p_c G_t G_r {h_0}{{r_0}^{ - \alpha }}}{N_c + {I_c}},
\label{eq:sinrC}
\end{equation}
where $p_c$ is the transmitting power of information signal of the typical IU device; $r_0$ is the distance between the typical IU and the desired information receiver; $N_c$ is the additive white Gaussian noise; ${I_c}$ is the aggregated interference from other IUs in the network. In specific, ${I_c} =\sum\limits_{i \in \Phi /{t_0}}p_c G_t G_r {{h_i}r_i^{ - \alpha }}$ \cite{JeongSeon:2023TVT}, $\Phi$ is the set of IU devices that uniformly distributed following PPP, $p_c$ is the transmitting power of interferer information signals, $r_i$ is the distance between the $i$th interferer IU and the typical IU.

With the bandwidth of communication $B_c$ and the SINR of the information signal in equation (\ref{eq:sinrC}), the achievable transmission capacity is $C_{B}= B_c \log _2 (1+ \mathrm{SINR_c})$ \cite{Xiaoqian:TMC2022}.

\subsection{Performance Analysis Model}
\label{subsec:perfor_model}
Since the IU transmits the sensing signal and information signal simultaneously, we seek to develop a unified and efficient framework to analyze the sensing performance and communication performance.
Then we introduce the well-known performance metrics: the probability of detection $P_D$ and the probability of coverage $P_{C}$.
The probability of detection $P_D (\zeta_s)$ is the probability that the $R_{B}$ at the typical IU exceeds a predefined threshold value $\zeta_s$.
The probability of coverage $P_C(\zeta_c)$ is the probability that the $C_{B}$ at the desired IU receiver is above a predefined threshold value $\zeta_c$.

\section{Performance Analysis}
\label{sec:analysis}
In this section, we focus on analyzing the sensing performance and communication performance to figure out the performance trade-off in the distributed ISAC network.

\subsection{Sensing Performance Analysis}
We first analyze the sensing performance of the typical IU in the distributed ISAC network. The analytical expression of probability of detection $P_D$ is derived as follows.

\begin{theorem}
\label{theorem:Pinf_s}
The probability of detection $P_D$ in the ISAC network can be expressed by
\begin{equation}
\label{PD}
{P_D(\zeta_s)=\exp \left(-\frac{\eta D^{2 \alpha} N_0}{\bar{\sigma}}\right)   \left[ \frac{2 }{R^2} {  \int_0^R  (\frac{\bar{\sigma} r^{ 1 + \alpha }}{\bar{\sigma} r^{ \alpha } + 4 \pi \gamma_s D^{2 \alpha}  } )dr } \right]^n},
\end{equation}
where $\eta=\frac{\gamma_s (4 \pi)^3}{p_s G_t G_r \lambda_w^2}$ and $\gamma_s= \frac{ 2^{ \frac{2 T \zeta_s}{\delta} } -1 }{2 T B_s} $.
\end{theorem}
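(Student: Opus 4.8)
The plan is to recognize $P_D(\zeta_s)$ as a tail probability of the sensing SINR and to evaluate it through the Laplace transform of the aggregate interference $I_s$. First I would invert the monotone map $R_{B} = \frac{\delta}{2T}\log_2(1+2TB_s\,\mathrm{SINR_s})$: the event $\{R_{B}>\zeta_s\}$ is equivalent to $\mathrm{SINR_s} > \frac{2^{2T\zeta_s/\delta}-1}{2TB_s}$, which is exactly $\gamma_s$ as defined in the statement. Substituting the echo power (\ref{eq:Pech}) and the definition of $\mathrm{SINR_s}$, and writing $\eta=\frac{\gamma_s(4\pi)^3}{p_s G_t G_r\lambda_w^2}$, the condition $\mathrm{SINR_s}>\gamma_s$ rearranges to $\sigma > \eta D^{2\alpha}(I_s+N_0)$.

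Next I would use the Swerling type-1 model (\ref{eq:sigma}): since $\sigma$ is exponential with mean $\bar\sigma$, conditioning on $I_s$ gives $\mathbb{P}(\sigma>x\mid I_s)=e^{-x/\bar\sigma}$, so that $P_D(\zeta_s)=\exp\!\big(-\tfrac{\eta D^{2\alpha}N_0}{\bar\sigma}\big)\,\mathbb{E}_{I_s}\!\big[\exp(-\tfrac{\eta D^{2\alpha}}{\bar\sigma}I_s)\big]$. This already isolates the constant prefactor in (\ref{PD}); what remains is the Laplace transform $\mathcal{L}_{I_s}(s)$ of $I_s$ evaluated at $s=\eta D^{2\alpha}/\bar\sigma$.

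The main computational step is this Laplace transform. Conditioning on the user number $n$, the $n$ interfering IUs are independent and uniform over the disk of radius $R$, so $I_s=\sum_{j=1}^n A_eS\,h_j l_j^{-\alpha}$ is a sum of i.i.d. terms and $\mathcal{L}_{I_s}(s)=\big(\mathbb{E}[e^{-sA_eS h l^{-\alpha}}]\big)^n$. Averaging first over the unit-mean exponential fading $h$ gives $\mathbb{E}_h[e^{-sA_eS h l^{-\alpha}}]=(1+sA_eS l^{-\alpha})^{-1}$; then averaging over $l$, whose density on the disk is $2r/R^2$ for $0\le r\le R$, yields $\frac{2}{R^2}\int_0^R \frac{r^{1+\alpha}}{r^\alpha+sA_eS}\,dr$ after clearing the $r^{-\alpha}$. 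Substituting $s=\eta D^{2\alpha}/\bar\sigma$ and using $A_e S=\frac{p_s G_t G_r\lambda_w^2}{(4\pi)^2}$ collapses the constant to $sA_eS=\frac{4\pi\gamma_s D^{2\alpha}}{\bar\sigma}$, turning the integrand into $\frac{\bar\sigma r^{1+\alpha}}{\bar\sigma r^\alpha+4\pi\gamma_s D^{2\alpha}}$ and reproducing (\ref{PD}).

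I do not expect a genuine obstacle: this is a standard SINR-coverage computation. The only points needing care are (i) using the conditioned-on-$n$ description, so that the interference Laplace transform is a plain $n$-fold product rather than a PPP probability generating functional; (ii) getting the uniform-on-disk distance density $2r/R^2$ correct; and (iii) the bookkeeping that absorbs $\eta$, $A_e$, and $S$ into the single factor $4\pi\gamma_s D^{2\alpha}$. A minor side remark is that the Swerling-I tail identity $\mathbb{P}(\sigma>x)=e^{-x/\bar\sigma}$ requires $x\ge 0$, which holds here since $\eta D^{2\alpha}(I_s+N_0)\ge 0$.
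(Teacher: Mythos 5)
Your proposal is correct and follows essentially the same route as the paper's proof: invert $R_B$ to obtain the threshold $\gamma_s$, use the exponential (Swerling-I) tail of $\sigma$ to factor out $\exp(-\eta D^{2\alpha}N_0/\bar\sigma)$, and evaluate the interference Laplace transform as an $n$-fold i.i.d. product via the exponential-fading MGF and the uniform-on-disk density $2r/R^2$, with the constants collapsing to $4\pi\gamma_s D^{2\alpha}/\bar\sigma$ exactly as in the paper. No gaps.
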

\begin{proof}

From the definition of probability of detection $P_D$ given in Section~\ref{subsec:perfor_model} and the expression of $R_{B}$, we have \cite{Fang:2020WCL},
\begin{equation}
\begin{aligned}
P_D(\zeta_s)=  P[R_{B}>\zeta_s]=P\left[\mathrm{SINR_s} >  \frac{ 2^{ \frac{2 T  \zeta_s}{\delta} } -1 }{2 T B_s} \right].
\label{eq:PDorigin}
\end{aligned}
\end{equation}
For simplicity, we denote $\gamma_s= \frac{ 2^{ \frac{2 T  \zeta_s}{\delta} } -1 }{2 T B_s} $. Then we substitute the expression of $\mathrm{SINR_s}$ in \eqref{eq:sigma} and the expression of $P_{e c h}$ in \eqref{eq:Pech} into \eqref{eq:PDorigin}, the above equation is expressed by the following,
\begin{equation}
\begin{aligned}
P_D(\zeta_s)=P[\mathrm{SINR_s}>\gamma_s]=P\left[\sigma>\frac{\gamma_s(4 \pi)^3 D^{2 \alpha}}{p_s G_t G_r \lambda_w^2}\left(I_s+N_0\right)\right].
\label{eq:PD}
\end{aligned}
\end{equation}

As introduced in Sections~\ref{subsec:channel_model},
the cross-section of the target ${\sigma}$ is a random
variable following the Swerling type-1 model. With the PDF given by \eqref{eq:sigma}, we obtain
\begin{equation}
P_D(\zeta_s)=\exp \left(-\frac{\eta D^{2 \alpha} N_0}{\bar{\sigma}}\right) E_{I_s}\left[\exp \left(-\frac{\eta D^{2 \alpha}}{\bar{\sigma}} I_s \right)\right],
\end{equation}
where $\eta=\frac{\gamma_s (4 \pi)^3}{p_s G_t G_r \lambda_w^2}$ for simplicity.

Next, we calculate term ${E_{I_s}}[\exp (-\frac{\eta D^{2 \alpha}}{\bar{\sigma}} I_s)]$ as \mbox{\cite{Xiaoming:2022TWC,Xuran:2022JSAC}},

\begin{equation}
\begin{aligned}
\label{eq:PC_E_It}
& {E_{I_s}}\left[\exp (-\frac{\eta D^{2 \alpha}}{\bar{\sigma}} I_s)\right]\\ \overset{(a)} =  &  {E_{\Phi ,\{ {h_i}\} }}\left[\exp ( -\frac{\eta D^{2 \alpha}}{\bar{\sigma}} \sum_{j=1}^n A_e S h_j l_j ^{-\alpha} )\right] \\
\overset{(b)} =  & {E_{\{{l_j}\} ,\{{h_j}\} }}\left[\exp ( -\frac{\eta D^{2 \alpha} A_e S}{\bar{\sigma}} \sum_{j=1}^n  h_j l_j ^{-\alpha} )\right] \\
\overset{(c)} =  & {E_{\{{l_j}\} ,\{{h_j}\} }}\left[ \prod\limits_{j = 1}^{n}  \exp ( -\frac{4\pi \gamma_s D^{2 \alpha}}{\bar{\sigma}} h_j l_j ^{-\alpha})\right] \\
\overset{(d)} = & \left[{E_{l, h}} \left(\exp ( - h l^{-\alpha} \frac{4\pi \gamma_s D^{2 \alpha}}{\bar{\sigma}}  ) \right)\right]^n  \\
\overset{(e)} = & \left[{E_{l}} \left( {\frac{\bar{\sigma} l^{ \alpha }}{\bar{\sigma} l^{ \alpha } + 4\pi \gamma_s D^{2 \alpha} }} \right) \right]^n \\
\overset{(f)} =  & \left[ {  \int_0^R  (\frac{\bar{\sigma} l^{ \alpha }}{\bar{\sigma} l^{ \alpha } + 4\pi \gamma_s D^{2 \alpha} } ) {f_l}(l)dr } \right]^n.
\end{aligned}
\end{equation}
where $(a)$ follows by substituting the expression of $I_s$, and $l$ is the distance between the IU transmitter and the sensing target; $(b)$ is derived due to the property of Poisson Point Process (PPP) of $\Phi$; $(c)$ is found by the summation properties of exponential function; $(d)$ is derived from the assumption that Rayleigh fading factor of each channel follows exponentially independent identical distribution, and the distance from each IU to target follows same independent identical distribution; $(e)$ is derived from the property of moment generating function of the exponential variable; $(f)$ is derived due to the definition of the expectation.

Since the typical IU transmitter is located at the center of a circular network area and each IU in this network follows uniformly i.i.d., the probability density function of $l$ can be given by ${f_l}(l) = \frac{2\pi l}{\pi {R^2}} = \frac{2l}{R^2}$ \cite{Yujie:2023TCOM}, where $R$ is the radius of the circular network area and $~0 < l \le R$.

Substituting ${f_l}(l)$  into ~\eqref{eq:PC_E_It}, we can calculate
\begin{equation}
\label{eq:PC_E_It3}
{E_{I_s}}\left[\exp (-\frac{\eta D^{2 \alpha}}{\bar{\sigma}} I_s)\right]
=  \left[ \frac{2 }{R^2} {  \int_0^R  (\frac{\bar{\sigma} l^{ 1 + \alpha }}{\bar{\sigma} l^{ \alpha } + 4\pi \gamma_s D^{2 \alpha} } )dr } \right]^n ,
\end{equation}

Then we derive the complete expression of the probability of detection $P_D$ as (\ref{PD}).
\end{proof}
\subsection{Communication Performance Analysis}
\label{subsec:commmunication}

Then we analyze the communication performance of the typical IU and obtain the probability of coverage $P_C$ of the distributed ISAC network as follows.

\begin{theorem}
\label{theorem:PT_f}
The probability of coverage $P_C$ in the ISAC network can be expressed by
\begin{equation}
 {\begin{aligned}
\label{eq:PT_f}
& P_C(\zeta_c) =
\\& \frac{2^m}{R^{2m}} \displaystyle{\int_0^R {\exp ( - {\gamma_p}{r_0}^\alpha N_c)
{\left( {\frac{2}{{{R^2}}}\int_0^R {\frac{{{r^{1 + \alpha }}}}{{{r^\alpha }{\rm{ + }}\gamma_c
 {r_0}^\alpha }}}dr} \right)^{m - 1}}} d{r_0}},
\end{aligned}}
\end{equation}
where ${\gamma_p}= \gamma_c /(p_c G_t G_r)$,  $\gamma_c = 2^{ \zeta_c / B_c } -1$ and $m$ is the expectation of the number of legitimate IUs in the communication area.
\end{theorem}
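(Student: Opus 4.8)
The plan is to mirror the proof of Theorem~\ref{theorem:Pinf_s}, with the exponential fading $h_0$ of the desired communication link now playing the role that the Swerling cross-section $\sigma$ played in the sensing case. First I would rewrite the coverage event as an SINR event. By the definition of $P_C$ in Section~\ref{subsec:perfor_model} together with $C_{B}=B_c\log_2(1+\mathrm{SINR_c})$, we have $P_C(\zeta_c)=P[C_{B}>\zeta_c]=P[\mathrm{SINR_c}>2^{\zeta_c/B_c}-1]=P[\mathrm{SINR_c}>\gamma_c]$ with $\gamma_c=2^{\zeta_c/B_c}-1$. Substituting \eqref{eq:sinrC} and solving for $h_0$ gives $P_C(\zeta_c)=P\!\left[h_0>\gamma_p r_0^{\alpha}(N_c+I_c)\right]$, where $\gamma_p=\gamma_c/(p_c G_t G_r)$.

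Next, because $h_0$ is exponentially distributed with unit mean and is independent of $r_0$ and $I_c$, I would condition on $r_0$ and $I_c$ and use $P[h_0>x]=e^{-x}$ to obtain $P_C(\zeta_c)=E_{r_0}\!\left[\exp(-\gamma_p r_0^{\alpha}N_c)\,E_{I_c}\!\left[\exp(-\gamma_p r_0^{\alpha}I_c)\mid r_0\right]\right]$, so that the noise contributes a deterministic factor and the interference contributes the Laplace transform of $I_c$ evaluated at $s=\gamma_p r_0^{\alpha}$. The core step is to evaluate this Laplace transform, which I would carry out exactly along the lines of steps $(a)$--$(f)$ in \eqref{eq:PC_E_It}: substitute $I_c=\sum_{i\in\Phi/t_0}p_c G_t G_r h_i r_i^{-\alpha}$, whereupon the product $\gamma_p r_0^\alpha\, p_c G_t G_r$ collapses to $\gamma_c r_0^\alpha$; invoke the PPP property of $\Phi$ and the i.i.d.\ assumptions on $\{h_i\}$ and $\{r_i\}$ to turn the expectation of a product into the $(m-1)$-th power of a single-interferer expectation (with $m$ the mean number of legitimate IUs in the communication area, hence $m-1$ interferers seen by the typical link); evaluate the fading expectation via the moment generating function of the exponential variable, $E_h[\exp(-\gamma_c r_0^\alpha h r^{-\alpha})]=r^\alpha/(r^\alpha+\gamma_c r_0^\alpha)$; and average over the interferer distance $r$ using the uniform density $f_r(r)=2r/R^2$ on $(0,R]$. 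This yields $E_{I_c}[\exp(-\gamma_p r_0^\alpha I_c)\mid r_0]=\bigl(\tfrac{2}{R^2}\int_0^R \tfrac{r^{1+\alpha}}{r^\alpha+\gamma_c r_0^\alpha}\,dr\bigr)^{m-1}$.

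Finally I would average over the desired-link distance $r_0$: by the same geometric reasoning used for $f_l$ in the sensing proof (typical transmitter at the disk center, receiver placed uniformly at random), $r_0$ has density $2r_0/R^2$ on $(0,R]$. Substituting this density and collecting the constant factors produces \eqref{eq:PT_f}. I expect the main obstacle to be that $r_0$ appears simultaneously in the desired-signal term and inside the interference Laplace transform (through $\gamma_c r_0^\alpha$ in the denominator), so the outer $r_0$-integration cannot be decoupled from the inner $r$-integration and the result must remain a nested integral; secondary care is needed in bookkeeping the interferer count ($m-1$ versus the total $m$) and in keeping $\gamma_p$ and $\gamma_c$ straight as factors are moved in and out of the exponentials.
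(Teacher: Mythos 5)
Your proposal is correct and follows essentially the same route as the paper's own proof: rewrite coverage as $P[\mathrm{SINR_c}>\gamma_c]$, condition on $r_0$, use the exponential tail of $h_0$ to split off the noise factor and the Laplace transform of $I_c$, evaluate that transform by the same PPP/i.i.d./MGF chain used in the sensing theorem, and average over the uniform-in-disk distance density $2r/R^2$. Your closing remarks about the nested (non-decoupled) integrals and the $m-1$ interferer count match the paper's treatment exactly.
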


\begin{proof}
Let $r$ denote the distance between the typical IU receiver and an IU transmitter. Since the receiver is located at the center of a circular area and each transmitter follows an independent and identical uniform distribution, the probability density function of $r$ is given by ${f_r}(r) = \frac{2\pi r}{\pi {R^2}} = \frac{2r}{R^2}$\cite{Yujie:2023TCOM} .

From the definition of the probability of coverage $P_C$ in Section~\ref{subsec:perfor_model}, we obtain ${P_C(\zeta_c)}$ as\mbox{\cite{JeongSeon:2023TVT}},
\begin{equation}
\begin{split}
\label{eq:PC_fh}
 P_C(\zeta_c)= &  P[C_{B}>\zeta_c]\\
 \overset{(a)} = & P[\mathrm{SINR_c} >  2^{ \zeta_c / B_c } -1] \\
 \overset{(b)} = &  \int_0^R {P\left[ {\frac{{p_c G_t G_r{h_0}{{r_0}^{ - \alpha }}}}{{N_c + {I_c}}} \ge \gamma_c |{r_0}} \right] {f_r}({r_0})d{r_0}} \\
 \overset{(c)} =  &  \int_0^R {P[{h_0} \ge {\gamma_p} {{r_0}^\alpha }(N_c + {I_c} )|{r_0}] {2{r_0}}{R^{-2}}d{r_0}},
\end{split}
\end{equation}
where $(a)$ holds by substituting the expression of $C_{B}$, $(b)$ is found by substituting the expression of $\mathrm{SINR_c}$ in  (\ref{eq:sinrC}) and denoting $\gamma_c = 2^{ \zeta_c / B_c } -1$, $(c)$ is derived by substituting the ${f_r}(r)$ and denoting ${\gamma_p}= \gamma_c /(p_c G_t G_r)$.

Since $h$ is a random variable following an exponential distribution with mean $1$, ${P[{h_0} \ge {\gamma_p} {r_0^\alpha }({\sigma ^2} + {I_c})|{r_0}]}$ in  (\ref{eq:PC_fh}) is represented as \cite{Yujie:2023TCOM} ,
\begin{equation}
\begin{aligned}
\label{eq:PC_h}
 P [{h_0} \ge {\gamma_p} {r_0^\alpha }(N_c + {I_c})|{r_0}]
 = {e^{ -  {\gamma_p}{{r_0}^\alpha } N_c}}  {E_{{I_c}}}[\exp ( - {\gamma_p} {{r_0}^\alpha }{I_c})] .\\
\end{aligned}
\end{equation}

Let $m$ denote the expected number of legitimate transmitters, we can derive the expression of the term {${E_{{I_c}}}\left[\exp ( - \gamma_p {{r_0}^\alpha }{I_c})\right]$} by \cite{Xiaoming:2022TWC,Xuran:2022JSAC},
\begin{equation}
\begin{aligned}
\label{eq:PC_E_Ic}
{E_{{I_c}}}& \left[\exp ( - \gamma_p {{r_0}^\alpha }{I_c})\right] \\
\overset{(a)} = & {E_{\Phi ,\{ {h_i}\} }}[\exp ( -  \gamma_p {r_0}^\alpha \sum\limits_{i \in \Phi /{t_0}} p_c G_t G_r {{h_i}r_i^{ - \alpha }} )] \\
\overset{(b)} =  &  {E_{\{{r_i}\} ,\{{h_i}\} }}[ \prod\limits_{i = 1}^{m-1}  \exp ( - \gamma_c {{r_0}^\alpha} {{h_i}r_i^{ - \alpha }} )] \\
\overset{(c)} = &  \left[{E_{r, h}} (\exp ( - \gamma_c {{r_0}^\alpha} {{h}{r}^{ - \alpha }}) )\right]^{m-1} \\
\overset{(d)} =  & \left[{E_{r}} \left( {\frac{1 }{1 + \gamma_c {r_0}^\alpha r^{ - \alpha }}} \right) \right]^{m-1} \\
\overset{(e)} =  & \left[ \frac{2 }{R^2} {  \int_0^R  ({{\frac{r^{1+\alpha} }{r^{\alpha} + \gamma_c  {r_0}^\alpha }} ) dr} } \right]^{m-1},
\end{aligned}
\end{equation}
where $(a)$ is derived by substituting the expression of $I_c$, $(b)$ is found by the property of exponential function, $(c)$ holds due to the assumption that Rayleigh fading factors are exponentially independent and identically distributed, $(d)$ is derived from the property of the moment generating function of an exponential variable, $(e)$ holds by substituting the expression of ${f_r}(r)$.

Substituting  (\ref{eq:PC_E_Ic}) into the corresponding part of  (\ref{eq:PC_h}), we obtain the expression of $P[{h_0}\ge{\gamma_p}{r^\alpha }(N_c + {I_c})]$ as
\begin{equation}
 {\begin{aligned}
\label{eq:PC_E_hfN}
P&[{h_0}\ge{\gamma_p}{r^\alpha }(N_c + {I_c})] \\
& = \exp ( - {\gamma_p}{r_0}^\alpha N_0)
{\left( {\frac{2}{{{R^2}}}\int_0^R {\frac{{{r^{1 + \alpha }}}}{{{r^\alpha }{\rm{ + }}\gamma_c
 {r_0}^\alpha }}}dr} \right)^{m - 1}}.
\end{aligned}}
\end{equation}

After plugging  (\ref{eq:PC_E_hfN}) into  (\ref{eq:PC_fh}), the expression of $P_C$ can be given as (\ref{eq:PT_f}).
\end{proof}

\begin{table}[!t]
\caption{Simulation Parameters\label{tab:table1}}
\begin{tabular}{c|c}
\hline \hline Sensing Parameters & Values  \\
\hline Wavelength of sensing signal $\lambda_w$   & $0.0833~\mathrm{m} $ \cite{Fang:2020WCL}\\
Transmitted power $p_{s}$  & 0 to $30~\mathrm{dBm}$ \cite{Zhiqiang:2022JSAC}\\
Antenna gain $G_t,G_r$ & $\mathrm{10~dBi}$ \\
Detection distance $D$ & 5 to $50 \mathrm{~m}$ \cite{Fang:2020WCL}\\
Standard temperature $T_o$  & $290 \mathrm{~K}$ \cite{Fang:2020WCL}\\
Radar bandwidth $B_s$ & 0 to $20 \mathrm{~MHz}$ \cite{Xiaoqian:TMC2022}\\
System loss factor $L$&  $10 \mathrm{~dB}$ \cite{Fang:2020WCL}  \\
Boltzmann constant $K_0$ & $1.38 \times 10^{-23}$ $\mathrm{~w/s}$  \cite{Fang:2020WCL}\\
 Average cross section of target $\bar{\sigma}$
 & $10 \mathrm{~dBsm}$ \cite{Fang:2020WCL} \\
 Radar pulse duration $T$  & 1\textmu s \cite{ZhangChao:2023Tcom} \\
 Radar's duty cycle $\delta$  & 0.01 \cite{ZhangChao:2023Tcom}\\
Expected number of IU interferers $n$ & 10 to 50 \\
\hline Communication Parameters & Values \\
\hline
Radius of communication area $R$ & $0.5\mathrm{~km}$  \cite{Fang:2020WCL}\\
Transmitted power $p_c$  & 0 to 30 $\mathrm{~dBm}$ \cite{Zhiqiang:2022JSAC}\\
Path loss exponent $\alpha$ & 2 to 5 \\
Communication bandwidth $B_c$  & 0 to $20
\mathrm{~MHz}$ \cite{Xiaoqian:TMC2022} \\
Expected number of IU transmitters $m$ &  10 to 50 \\
\hline \hline
\end{tabular}
\end{table}

\subsection{ Performance Trade-off Analysis }
 With the performance metrics of sensing and communication functions in the ISAC system, we can analyze the performance trade-off of sensing and communication. Firstly, we analyze the impact of transmit power allocation on the performance trade-off of sensing and communication. Due to the limited total transmit power budget $p_t$, the $P_{D}$ and $P_{C}$ could not reach the maximal value simultaneously. The power budget condition that caused the performance trade-off is given by \cite{LiuFan:2022JSAC},
\begin{equation}
 p_s+p_c \leq p_t.
\end{equation}

Another commonly used strategy of ISAC is to allocate the sensing and communication waveforms to different frequency bands. When the frequency-division ISAC is applied, there is a portion of the total bandwidth used for sending the sensing signal, and the other portion of the total bandwidth used for sending the information signal. We can express the bandwidth budget condition that caused the performance trade-off as \cite{Xiaoqian:TMC2022},
\begin{equation}
B_s+B_c \leq B.
\end{equation}
where $B_s$ and $B_c$ are the bandwidth of sensing and communication, separately.

 \section{Simulations and Discussions}

In this section, analytical results are provided and
extensive Monte-Carlo simulations are conducted to validate our analysis.
The simulation parameters are summarized in Table \ref{tab:table1}.

\begin{figure}[htbp]
\begin{minipage}[t]{0.49\linewidth}
\centering
\includegraphics[width=1.83 in]{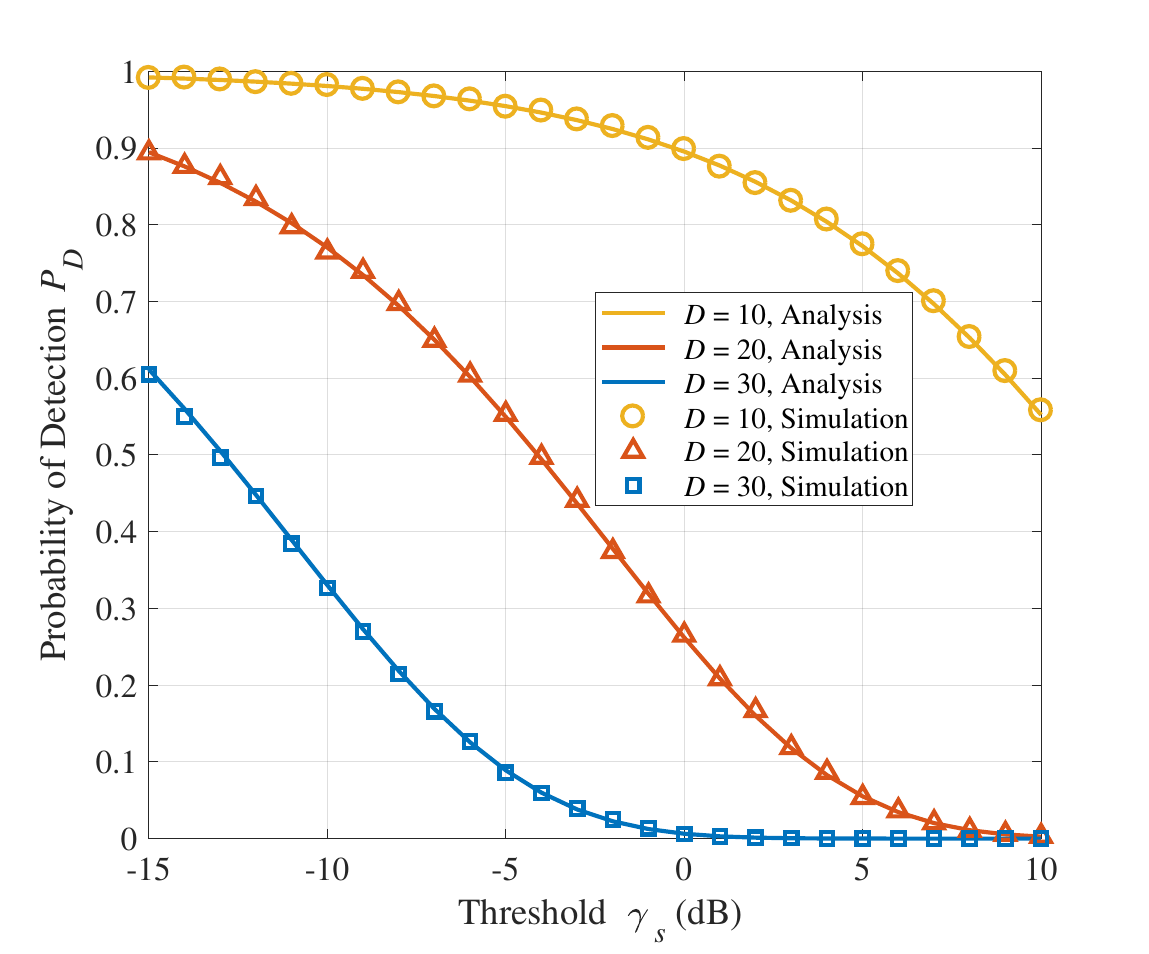}
\caption{Probability of detection $P_D$ varies with threshold value $\gamma_s$.}
\label{fig:reultPDDis}
\end{minipage}
\begin{minipage}[t]{0.49\linewidth}
\centering
\includegraphics[width=1.83 in]{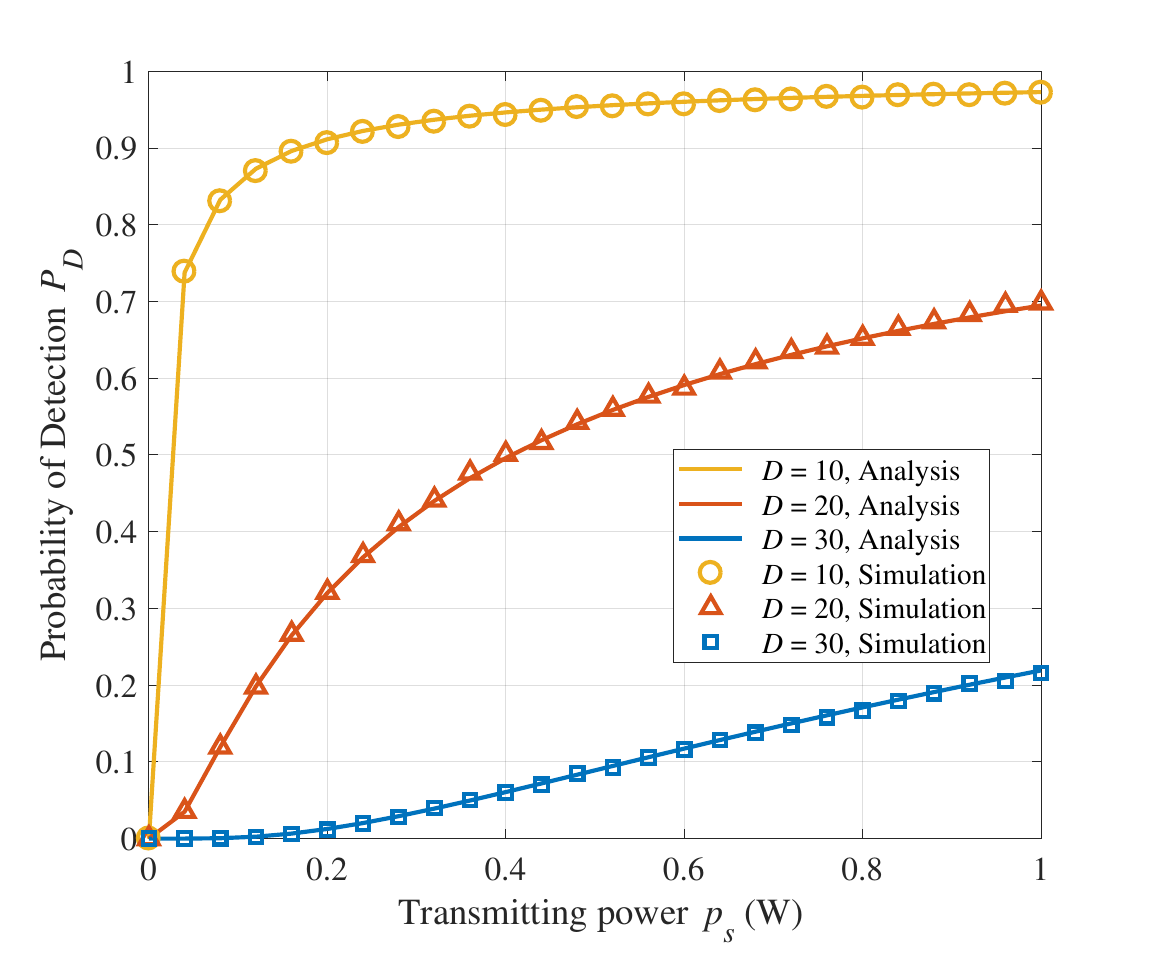}
\caption{Probability of detection $P_D$ varies with transmit power $p_s$.}
\label{fig:reultPDPs}
\end{minipage}
\end{figure}

First, we demonstrate the results of the probability of detection $P_D$ of the desired IU varies with threshold value $\gamma_s$ in Fig. \ref{fig:reultPDDis}. In this figure, different curves represent different detection distances $D$.
When the threshold value increases from a very low level (indicating the capability of IU devices extracting useful signals becoming weak), the probability of detection $P_D$ decreases, and the desired IU is less likely to successfully detect the sensing target.
When the threshold value increases to a high level, the desired IU is likely to successfully detect the sensing target if the sensing target is very near to the desired IU (e.g., less than 10 meters).

Next, we present the results of the probability of detection $P_D$ varies with the transmitting power $p_s$ of the desired IU in Fig.~\ref{fig:reultPDPs}.
When the transmitting power $p_s$ of the desired IU increases, the probability of detection $P_D$ increases. For example, when the detection distance $D$ is 20 meters and the transmitting power $p_s$ is 0.2 W, the probability of detection $P_D$ is 0.3245. When the detection distance $D$ remains to be 20 meters and $p_s$ becomes 0.6 W, the $P_D$ increases to 0.5926.

\begin{figure}[htbp]
\begin{minipage}[t]{0.48\linewidth}
 \centering
 \includegraphics[width=1.83 in]{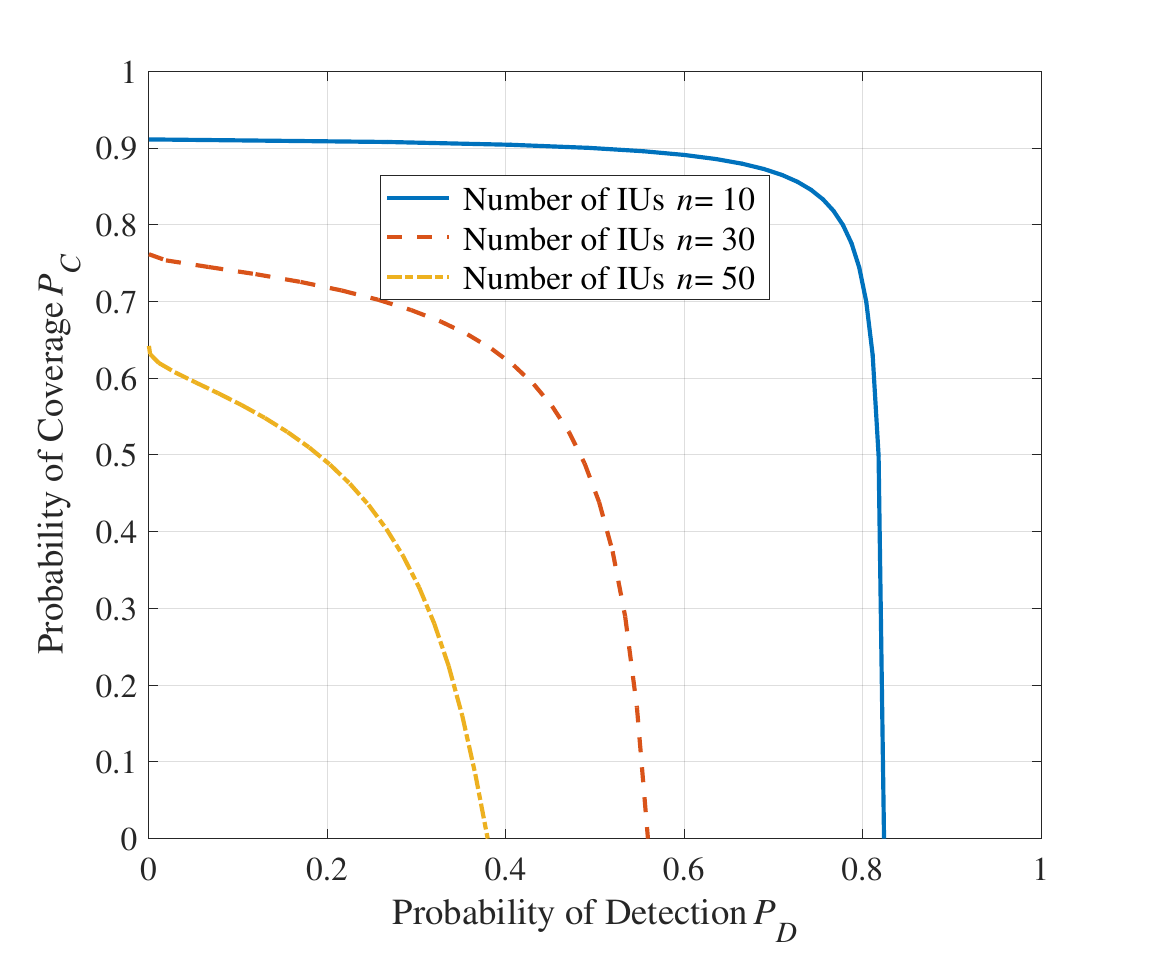}
 \caption{Performance trade-off of the ISAC network between the communication and sensing under power allocation.}
\label{fig:reultPD}
\end{minipage}
\begin{minipage}[t]{0.46\linewidth}
\centering
\includegraphics[width=1.83 in]{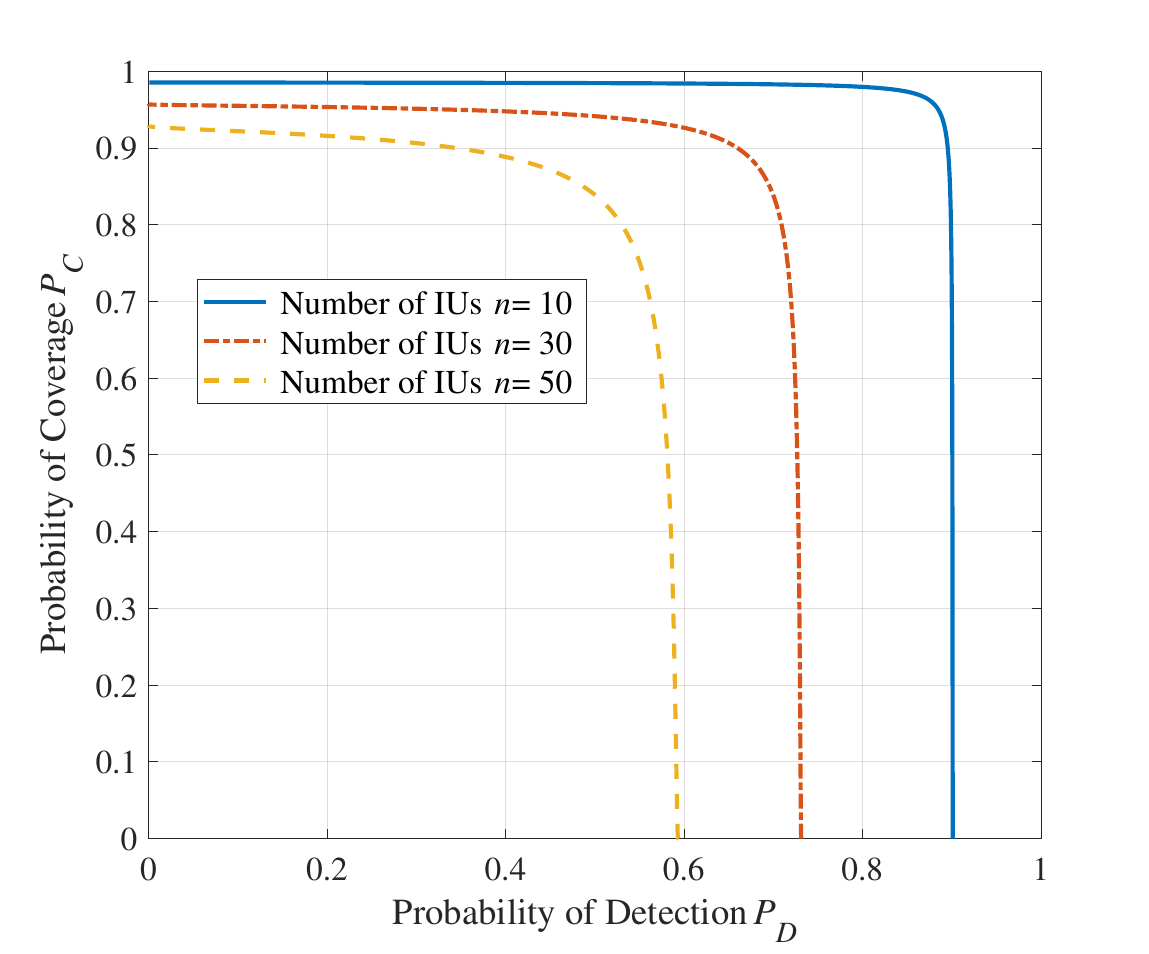}
\caption{Performance trade-off of the ISAC network between the communication and sensing under bandwidth allocation.}
\label{fig:reultPDB}
\end{minipage}
\end{figure}


As the main contribution of this paper, Fig.~\ref{fig:reultPD} demonstrates the fundamental performance trade-off of communication and sensing under power budget. Since the sum of the transmitting powers $p_c$ and $p_s$ are limited by the total power $p_t$, the performance of the communication system and sensing system could hardly be perfect simultaneously. However, achieving the best balance between the two performances is possible. In Fig.~\ref{fig:reultPD}, the points on each curve represent the maximal limitation value of the probability of coverage $P_C$ and probability of detection $P_D$ (under the condition that the total power $p_t$ is 30 dBm). When we increase the power $p_c$ to get the higher $P_C$, the power $p_s$ will decrease since the total power $p_t$ is limited, and the $P_D$ will decrease. Therefore, the curves provide the upper bound of $P_C$ and $P_D$ in a uniformed ISAC system, and the performance trade-off in a practical environment is likely to appear on the bottom left side of the curves. In this figure, different colors of the curves represent the different numbers of  IUs. Observe that when the numbers of IUs $n$ increase from 10 to 50, the upper boundary of $P_C$ and $P_D$ decreases rapidly. When the number of IUs $n$ is fixed, the increase of $P_D$ leads to the decrease of $P_C$ (the $p_s$ increases while the $p_c$ decreases).


When we allocate the bandwidth rather than the transmit power between the sensing and communication functions, we get the results of performance limits and performance trade-off of communication and sensing in the ISAC network shown by Fig.~\ref{fig:reultPDB}. Similar to the results of Fig.~\ref{fig:reultPD}, the results of Fig.~\ref{fig:reultPDB} illustrated the bandwidth resource for sensing and communication are also competitive. When the bandwidth $B_c$ increases, the $P_{C}$ increases while the $P_D$ decreases. This is because when the $B_c$ increase, the $B_s$ reduces due to the limited total bandwidth.
We also find that variations in bandwidth have a larger impact on the communication function than on the sensing function. When the bandwidth $B_c$  decreases from $ 20 \mathrm{~MHz}$ to 0 ( i.e., the bandwidth $B_s$ increases from 0 to $ 20 \mathrm{~MHz}$ ) and number of IUs $n=50$, the $P_{C}$ decreases from 0.928 to 0, while the $P_{D}$ increases from 0 to 0.596 (shown by the yellow cure).

\section{Conclusion}
In this letter, we proposed a stochastic geometry-based framework to analyze the sensing and communication performance trade-off of the distributed ISAC network.
Based on our analytical derivations, we provided a quantitative description of the performance limits and the performance trade-off between sensing and communication under the given transmit power and bandwidth budget. Our results are beneficial for the guidance of the power and bandwidth allocation between sensing and communication in a distributed ISAC network.

\bibliography{IEEEabrv,health}

\begin{thebibliography}{10}
\providecommand{\url}[1]{#1}
\csname url@samestyle\endcsname
\providecommand{\newblock}{\relax}
\providecommand{\bibinfo}[2]{#2}
\providecommand{\BIBentrySTDinterwordspacing}{\spaceskip=0pt\relax}
\providecommand{\BIBentryALTinterwordstretchfactor}{4}
\providecommand{\BIBentryALTinterwordspacing}{\spaceskip=\fontdimen2\font plus
\BIBentryALTinterwordstretchfactor\fontdimen3\font minus
  \fontdimen4\font\relax}
\providecommand{\BIBforeignlanguage}[2]{{%
\expandafter\ifx\csname l@#1\endcsname\relax
\typeout{** WARNING: IEEEtran.bst: No hyphenation pattern has been}%
\typeout{** loaded for the language `#1'. Using the pattern for}%
\typeout{** the default language instead.}%
\else
\language=\csname l@#1\endcsname
\fi
#2}}
\providecommand{\BIBdecl}{\relax}
\BIBdecl

\bibitem{LiuFan:2022JSAC}
F.~Liu, Y.~Cui, C.~Masouros, J.~Xu, T.~X. Han, Y.~C. Eldar, and S.~Buzzi,
  ``Integrated sensing and communications: Toward dual-functional wireless
  networks for 6{G} and beyond,'' \emph{IEEE J. Sel. Areas Commun.}, vol.~40,
  no.~6, pp. 1728--1767, Mar. 2022.

\bibitem{Zhiqiang:2022JSAC}
Z.~Xiao and Y.~Zeng, ``Waveform design and performance analysis for full-duplex
  integrated sensing and communication,'' \emph{IEEE J. Sel. Areas Commun.},
  vol.~40, no.~6, pp. 1823--1837, Mar. 2022.

\bibitem{ShuaishuaiGuo:Mobicom2022}
S.~Guo, D.~Cong, J.~Ye, S.~Dang, and N.~Saeed, ``Non-uniform beam pattern
  modulation for joint sensing and communication in {6G} networks,'' in
  \emph{Proc. 1st {ACM} MobiCom Workshop on ISAC Systems}, Oct. 2022, pp.
  31--36.

\bibitem{CongDingyan:WCL2022}
D.~Cong, S.~Guo, H.~Zhang, J.~Ye, and M.-S. Alouini, ``Beamforming design for
  integrated sensing and communication systems with finite alphabet input,''
  \emph{IEEE Wireless Commun. Lett.}, vol.~11, no.~10, pp. 2190--2194, Oct.
  2022.

\bibitem{Kafafy:2021ICL}
M.~Kafafy, A.~S. Ibrahim, and M.~H. Ismail, ``Stochastic geometry-based
  performance analysis of cellular systems in the vicinity of rotating
  radars,'' \emph{IEEE Commun. Lett.}, vol.~25, no.~4, pp. 1391--1395, Apr.
  2021.

\bibitem{Ouyang:2022WCL}
C.~Ouyang, Y.~Liu, and H.~Yang, ``Performance of downlink and uplink integrated
  sensing and communications {(ISAC)} systems,'' \emph{IEEE Wireless Commun.
  Lett.}, vol.~11, no.~9, pp. 1850--1854, June 2022.

\bibitem{ZhangChao:2023Tcom}
C.~Zhang, W.~Yi, Y.~Liu, and L.~Hanzo,
  ``{Semi-Integrated-Sensing-and-Communication} ({S}emi-{ISaC}): From {OMA} to
  {NOMA},'' \emph{IEEE Trans. Commun.}, vol.~71, no.~4, pp. 1878--1893, Apr.
  2023.

\bibitem{Guoliang:2022WCL}
G.~Li, S.~Wang, K.~Ye, M.~Wen, D.~W.~K. Ng, and M.~Di~Renzo, ``Multi-point
  integrated sensing and communication: Fusion model and functionality
  selection,'' \emph{IEEE Wireless Commun. Lett.}, vol.~11, no.~12, pp.
  2660--2664, Dec. 2022.

\bibitem{CongDingyan:TITS2022}
D.~Cong, S.~Guo, S.~Dang, and H.~Zhang, ``Vehicular behavior-aware beamforming
  design for integrated sensing and communication systems,'' \emph{IEEE Trans.
  Intell. Transp. Syst.}, early access, Mar. 2023.

\bibitem{Zexuan:2023IoTJ}
Z.~Jing, J.~Mu, X.~Li, Q.~Zhou, and Q.~Tian, ``Efficient fusion and
  reconstruction for communication and sensing signals in green iot networks,''
  \emph{IEEE Internet Things J.}, vol.~10, no.~11, pp. 9319--9328, June 2023.

\bibitem{Xiaoming:2022TWC}
X.~Shi and N.~Deng, ``Modeling and analysis of mm{W}ave {UAV} swarm networks: A
  stochastic geometry approach,'' \emph{IEEE Trans. Wireless Commun.}, vol.~21,
  no.~11, pp. 9447--9459, Nov. 2022.

\bibitem{Xuran:2022JSAC}
X.~Li, S.~Guo, H.-N. Dai, and D.~Li, ``Infectious probability analysis on
  {COVID-19} spreading with wireless edge networks,'' \emph{IEEE J. Sel. Areas
  Commun.}, vol.~40, no.~11, pp. 3239--3254, Oct. 2022.

\bibitem{Ren:2019WCL}
P.~Ren, A.~Munari, and M.~Petrova, ``Performance tradeoffs of joint
  radar-communication networks,'' \emph{IEEE Wireless Commun. Lett.}, vol.~8,
  no.~1, pp. 165--168, Aug. 2019.

\bibitem{Fang:2020WCL}
Z.~Fang, Z.~Wei, X.~Chen, H.~Wu, and Z.~Feng, ``Stochastic geometry for
  automotive radar interference with {RCS} characteristics,'' \emph{IEEE
  Wireless Commun. Lett.}, vol.~9, no.~11, pp. 1817--1820, June 2020.

\bibitem{Chiriyath:2016TSP}
A.~R. Chiriyath, B.~Paul, G.~M. Jacyna, and D.~W. Bliss, ``Inner bounds on
  performance of radar and communications co-existence,'' \emph{IEEE Trans.
  Signal Process.}, vol.~64, no.~2, pp. 464--474, Jan. 2016.

\bibitem{JeongSeon:2023TVT}
J.~S. Yeom, E.~Chu, J.~M. Kim, and B.~C. Jung, ``Closed-form upper and lower
  bounds on coverage probability of repulsive wireless networks,'' \emph{IEEE
  Trans. Veh. Technol.}, vol.~72, no.~5, pp. 6885--6890, May 2023.

\bibitem{Xiaoqian:TMC2022}
X.~Li, G.~Feng, Y.~Sun, S.~Qin, and Y.~Liu, ``A unified framework for joint
  sensing and communication in resource constrained mobile edge networks,''
  \emph{IEEE Trans. Mobile Comput. (Early Access)}, pp. 1--15, 2022.

\bibitem{Yujie:2023TCOM}
Y.~Qin, M.~A. Kishk, and M.-S. Alouini, ``A dominant interferer plus mean
  field-based approximation for {SINR} meta distribution in wireless
  networks,'' \emph{IEEE Trans. Commun.}, vol.~71, no.~6, pp. 3663--3678, June
  2023.

\end{thebibliography}

\vfill

\end{document}